\documentclass[10pt,conference,letterpaper]{IEEEtran}

\usepackage{cite}
\usepackage{amsmath,amssymb,amsfonts}
\usepackage{booktabs}
\usepackage{array}
\usepackage{tabularx}
\usepackage{graphicx}
\usepackage{xcolor}
\usepackage{url}
\usepackage{float}
\usepackage{algorithm}
\usepackage{algorithmic}
\usepackage{microtype}
\usepackage{tikz}
\usetikzlibrary{arrows.meta,positioning,fit,calc}

\usepackage{balance}

\definecolor{asgeblue}{HTML}{0072B2}
\definecolor{asgered}{HTML}{B2182B}
\definecolor{asgegreen}{HTML}{009E73}
\definecolor{asgegray}{HTML}{5B616B}
\definecolor{asgegrid}{HTML}{D8DDE5}
\definecolor{asgeink}{HTML}{24272C}
\colorlet{asgeaccent}{asgered}

\tikzset{
  pics/asge document/.style={code={
    \path[draw=asgegray!82,fill=white,line width=.55pt,rounded corners=.35mm]
      (-2.35,-2.75)--(2.35,-2.75)--(2.35,1.35)--(.95,2.75)
      --(-2.35,2.75)--cycle;
    \draw[asgegray!70,line width=.45pt] (.95,2.75)--(.95,1.35)--(2.35,1.35);
    \draw[asgegray!60,line width=.45pt]
      (-1.35,.75)--(.35,.75)
      (-1.35,-.15)--(1.35,-.15)
      (-1.35,-1.05)--(.75,-1.05);
  }},
  pics/asge verifier/.style={code={
    \path[draw=asgeblue,fill=asgeblue!9,line width=.65pt,rounded corners=.25mm]
      (0,3.0)--(2.45,2.05)--(2.05,-.55)
      .. controls (1.75,-1.85) and (.75,-2.55) .. (0,-2.95)
      .. controls (-.75,-2.55) and (-1.75,-1.85) .. (-2.05,-.55)
      --(-2.45,2.05)--cycle;
    \draw[asgeblue,line width=.75pt,line cap=round,line join=round]
      (-1.05,-.05)--(-.25,-.95)--(1.25,.85);
  }},
  pics/asge memory/.style={code={
    \path[draw=asgegray!82,fill=asgegray!7,line width=.6pt]
      (-2.45,-1.95)--(-2.45,1.8)
      arc[start angle=180,end angle=360,x radius=2.45,y radius=.78]
      --(2.45,-1.95)
      arc[start angle=0,end angle=-180,x radius=2.45,y radius=.78]--cycle;
    \path[draw=asgegray!82,fill=asgegray!9,line width=.6pt]
      (0,1.8) ellipse[x radius=2.45,y radius=.78];
    \draw[asgegray!72,line width=.45pt]
      (-2.45,-.05) arc[start angle=180,end angle=360,x radius=2.45,y radius=.72];
  }},
  pics/asge model/.style={code={
    \path[draw=asgegray!82,fill=white,line width=.6pt,rounded corners=.5mm]
      (-2.45,-2.45) rectangle (2.45,2.45);
    \foreach \p in {-1.45,0,1.45} {
      \draw[asgegray!74,line width=.45pt] (\p,2.45)--(\p,3.15);
      \draw[asgegray!74,line width=.45pt] (\p,-2.45)--(\p,-3.15);
      \draw[asgegray!74,line width=.45pt] (2.45,\p)--(3.15,\p);
      \draw[asgegray!74,line width=.45pt] (-2.45,\p)--(-3.15,\p);
    }
    \draw[asgegray!78,line width=.5pt]
      (-1.25,-.75)--(-.45,.65)--(.35,-.15)--(1.25,1.0);
    \foreach \p in {(-1.25,-.75),(-.45,.65),(.35,-.15),(1.25,1.0)}
      \fill[asgegray!78] \p circle (.3mm);
  }},
  pics/asge gateway/.style={code={
    \path[draw=asgegray!82,fill=white,line width=.6pt,rounded corners=.7mm]
      (-3.05,-2.35) rectangle (3.05,2.35);
    \draw[asgegray!78,line width=.55pt,-{Latex[length=1.15mm]}]
      (-2.05,.65)--(1.8,.65);
    \draw[asgegray!78,line width=.55pt,-{Latex[length=1.15mm]}]
      (2.05,-.65)--(-1.8,-.65);
  }},
  pics/asge server/.style={code={
    \path[draw=asgegray!82,fill=white,line width=.6pt,rounded corners=.45mm]
      (-2.55,-3.0) rectangle (2.55,3.0);
    \foreach \yy in {-1.55,0,1.55} {
      \draw[asgegray!65,line width=.45pt] (-1.65,\yy)--(1.65,\yy);
      \fill[asgegray!72] (-1.75,\yy+.48) circle (.28mm);
    }
  }},
  pics/asge server selected/.style={code={
    \path[draw=asgeblue,fill=asgeblue!8,line width=.7pt,rounded corners=.45mm]
      (-2.55,-3.0) rectangle (2.55,3.0);
    \foreach \yy in {-1.55,0,1.55} {
      \draw[asgeblue!72,line width=.45pt] (-1.65,\yy)--(1.65,\yy);
      \fill[asgeblue] (-1.75,\yy+.48) circle (.28mm);
    }
  }},
  pics/asge relay/.style={code={
    \path[draw=asgegray!82,fill=white,line width=.6pt,rounded corners=.55mm]
      (-2.65,-2.65) rectangle (2.65,2.65);
    \draw[asgegray!78,line width=.55pt,-{Latex[length=1.05mm]}]
      (-1.75,.7)--(1.55,.7);
    \draw[asgegray!78,line width=.55pt,-{Latex[length=1.05mm]}]
      (1.75,-.7)--(-1.55,-.7);
  }},
  pics/asge relay selected/.style={code={
    \path[draw=asgeblue,fill=asgeblue!8,line width=.7pt,rounded corners=.55mm]
      (-2.65,-2.65) rectangle (2.65,2.65);
    \draw[asgeblue,line width=.6pt,-{Latex[length=1.05mm]}]
      (-1.75,.7)--(1.55,.7);
    \draw[asgeblue,line width=.6pt,-{Latex[length=1.05mm]}]
      (1.75,-.7)--(-1.55,-.7);
  }},
  pics/asge controller/.style={code={
    \path[draw=asgeblue,fill=asgeblue!8,line width=.7pt,rounded corners=.6mm]
      (-3.5,-2.45) rectangle (3.5,2.45);
    \draw[asgeblue!75,line width=.5pt]
      (-2.25,1.1)--(2.25,1.1)
      (-2.25,0)--(2.25,0)
      (-2.25,-1.1)--(2.25,-1.1);
    \fill[asgeblue] (-.9,1.1) circle (.4mm);
    \fill[asgeblue] (1.1,0) circle (.4mm);
    \fill[asgeblue] (-.1,-1.1) circle (.4mm);
    \draw[asgeblue,line width=.55pt] (-.9,.55)--(-.9,1.65);
    \draw[asgeblue,line width=.55pt] (1.1,-.55)--(1.1,.55);
    \draw[asgeblue,line width=.55pt] (-.1,-1.65)--(-.1,-.55);
  }},
  pics/asge lock/.style={code={
    \path[draw=asgeblue,fill=asgeblue!9,line width=.55pt,rounded corners=.35mm]
      (-1.55,-1.75) rectangle (1.55,1.05);
    \draw[asgeblue,line width=.65pt,rounded corners=.7mm]
      (-.95,1.05)--(-.95,1.85)
      .. controls (-.95,3.0) and (.95,3.0) .. (.95,1.85)--(.95,1.05);
    \fill[asgeblue] (0,-.25) circle (.35mm);
  }}
}

\newcommand{\asgerr}{\textsc{ASGE-RR}}

\usepackage{amsthm}
\newcommand{\nop}[1]{}
\newtheorem{theorem}{Theorem}

\begin{document}


\title{ASGE-RR: Agentic Service Graph Embedding with Revisable Reservations for Dynamic AI-Agent Calls}

\author{Trond Vatten, Yuming Jiang \\Norwegian University of Science and Technology (NTNU), Trondheim, Norway} 
\maketitle

\begin{abstract}

AI-agent workflows often involve remote calls to models, memory stores, and tools distributed across a network. As execution progresses, these dependency calls collectively form an agentic service graph (ASG). Unlike traditional service requests, many dependency calls are revealed only at runtime. Consequently, allocating resources to a currently visible call may consume capacity later needed by a call from a higher-value workflow. We formulate this challenge as Agentic Service Graph Embedding (ASGE), an online network-control problem that maps runtime-revealed workflow calls to service replicas and network paths under capacity, cost and deadline constraints. We present ASGE-RR, an online ASGE controller with revisable reservations. ASGE-RR protects capacity for likely future calls while enforcing the constraints. ASGE-RR evaluates candidate replica-and-path mappings against predicted workflow continuations and updates reservations as new execution information becomes available. 

We evaluate ASGE-RR using OpenHands and GPT Researcher workflows executed with gpt-5.6-luna and replayed over in two complementary experimental environments, a controlled Docker testbed and a WAN testbed. The investigation shows that all the evaluated AI-agent tasks expose at least one runtime-revealed dependency call that can be steered before connection establishment. 
Exploiting this control point, even though the experimental environments are small-scale, 
ASGE-RR already demonstrates noticeable potential: It completes (up to) 10\% more workflow value than a same-information rolling-horizon controller and a current-call steering controller on the WAN testbed. 
The results suggest that runtime-revealed workflow structure creates a new network control opportunity: protecting resources for likely future calls allows more AI-agent workflows to finish in time. 
\end{abstract}

\begin{IEEEkeywords}
AI-agent workflows, Network-for-AI, agentic service graphs, virtual network embedding, computing-aware traffic steering, 
online virtual network embedding.
\end{IEEEkeywords}

\section{Introduction}

An AI-agent workflow for a complex task often includes remote calls to services like AI models, memory stores, and tools distributed across a network. Modern agent-serving platforms {\em dynamically invoke} these services {\em as execution unfolds}, allowing workflows to adapt to intermediate results and environmental conditions. We call each such remote invocation a {\em dependency call}. Because those services are commonly replicated across data centers, edge sites, or clusters, the infrastructure operator often has multiple equivalent endpoints for the same dependency call. Its network controller can therefore choose which replica a workflow contacts and which managed paths carry the request and response traffic. Existing work has already viewed such decisions as opportunities to reduce latency, balance load, or improve resource utilization ~\cite{santhanam2024alto,luo2026agentix,wang2026maestro}. Fig.~\ref{fig:architecture} shows this control surface: an AI-agent workflow reveals a dependency call, a gateway asks a network controller to select an eligible service replica and to manage the request and response paths before opening the connection.\begin{figure}[!t]
\centering
\begin{tikzpicture}[font=\scriptsize,>=Latex,x=1mm,y=1mm]
\tikzset{
  boundary/.style={draw=asgegray!55,rounded corners=2pt,line width=.65pt,
    fill=asgegray!2},
  badge/.style={circle,fill=asgeblue,text=white,minimum size=4mm,
    inner sep=0pt,font=\tiny\bfseries},
  tinylabel/.style={font=\tiny,text=asgegray,align=center},
  pics/asge controller red/.style={code={
    \path[draw=asgered,fill=asgered!8,line width=.7pt,rounded corners=.6mm]
      (-3.5,-2.45) rectangle (3.5,2.45);
    \draw[asgered!78,line width=.5pt]
      (-2.25,1.1)--(2.25,1.1)
      (-2.25,0)--(2.25,0)
      (-2.25,-1.1)--(2.25,-1.1);
    \fill[asgered] (-.9,1.1) circle (.4mm);
    \fill[asgered] (1.1,0) circle (.4mm);
    \fill[asgered] (-.1,-1.1) circle (.4mm);
    \draw[asgered,line width=.55pt] (-.9,.55)--(-.9,1.65);
    \draw[asgered,line width=.55pt] (1.1,-.55)--(1.1,.55);
    \draw[asgered,line width=.55pt] (-.1,-1.65)--(-.1,-.55);
  }}
}

\draw[boundary] (0,0) rectangle (29.5,22.5);
\node[font=\scriptsize\bfseries] at (14.75,19.4) {AI-agent workflow};
\pic[scale=.72] at (5.8,10.7) {asge model};
\pic[scale=.76] at (14.7,10.7) {asge document};
\pic[scale=.72] at (24.0,10.7) {asge verifier};
\draw[->,asgegray!74,line width=.75pt] (8.5,10.7)--(11.8,10.7);
\draw[->,asgeblue,line width=1pt] (17.5,10.7)--(21.1,10.7);
\node[badge] at (19.3,14.9) {1};
\node[font=\tiny,align=center] at (5.8,5.5) {model};
\node[font=\tiny,align=center] at (14.7,5.5) {result};
\node[font=\tiny,align=center] at (24.0,5.5) {verifier};

\pic[scale=.9] at (36.7,10.7) {asge gateway};
\node[font=\tiny\bfseries] at (36.7,5.0) {gateway};
\draw[<->,asgeblue,line width=1.05pt] (27.0,10.7)--(33.5,10.7);

\draw[boundary] (43.0,0) rectangle (64.0,18.5);
\node[font=\tiny\bfseries] at (53.5,15.8) {managed network};
\pic[scale=.58] at (46.7,9.3) {asge relay selected};
\pic[scale=.58] at (53.5,12.7) {asge relay selected};
\pic[scale=.58] at (53.5,5.9) {asge relay};
\pic[scale=.58] at (60.3,9.3) {asge relay selected};
\draw[asgegray!62,line width=.72pt]
  (49.0,7.9)--(51.2,6.5)
  (55.8,6.5)--(58.0,7.9);
\draw[asgeblue,line width=1.25pt]
  (49.0,10.7)--(51.2,12.1)
  (55.8,12.1)--(58.0,10.7);
\draw[->,asgeblue,line width=1.05pt] (39.7,10.7)--(43.9,9.3);

\draw[boundary] (67.0,0) rectangle (84.0,22.5);
\node[font=\tiny\bfseries] at (75.5,19.4) {eligible replicas};
\pic[scale=.78] at (75.5,13.2) {asge server selected};
\pic[scale=.78] at (75.5,5.6) {asge server};
\node[font=\tiny\bfseries] at (80.0,13.2) {A};
\node[font=\tiny] at (80.0,5.6) {B};
\draw[->,asgeblue,line width=1.2pt] (63.0,10.5)--(72.8,12.9);
\draw[->,asgegray!62,line width=.72pt] (63.0,8.3)--(72.8,5.9);
\node[badge] at (65.7,11.8) {3};

\node[font=\scriptsize\bfseries] at (53.5,31.0) {network controller};
\pic[scale=1.02] at (53.5,26.4) {asge controller red};
\node[badge,fill=asgered] at (43.0,25.8) {2};
\draw[->,asgered,dashed,line width=1pt]
  (36.7,13.7) .. controls (36.7,21.5) and (42.2,25.0) .. (49.9,25.7);
\draw[->,asgered,dashed,line width=1pt]
  (57.1,25.7) .. controls (63.5,25.2) and (65.8,16.4) .. (72.9,15.2);
\node[font=\tiny,text=black,fill=white,inner sep=.5pt] at (39.2,20.6)
  {call + deadline};
\node[font=\tiny,text=black,fill=white,inner sep=.5pt] at (63.8,22.7)
  {replica + paths};

\end{tikzpicture}
\caption{Conceptual overview of where ASGE acts. (1) An intermediate result reveals a dependency call. (2) Before connection, the gateway asks the network controller for a mapping. (3) The controller selects an eligible service replica and managed request and response paths.}
\label{fig:architecture}
\end{figure}

In this paper, we argue that adaptive AI-agent workflows create another Network-for-AI opportunity: network control can increase the amount of workflow value completed before deadline without adding capacity. However, there is an inherent challenge that arises because AI-agent workflows reveal resource allocation demands incrementally at runtime: a workflow's next remote call is often unknown until an earlier stage completes. Consequently, future network resource demands become visible only during execution. 

This runtime revelation creates a new network resource allocation problem, which we call Agentic Service Graph Embedding (ASGE). The problem is to decide how to map a workflow’s agentic service graph (a directed acyclic graph (DAG)) onto network resources, such as service replicas and communication paths. While a workflow runtime determines which call to invoke next, the network decides where the call is served and how traffic reaches it. ASGE differs from established networking formulations, including Service Function Chaining (SFC) ~\cite{halpern2015sfc} and Virtual Network Embedding (VNE) ~\cite{etsi2014nfv}, which assume requests are known at admission and during embedding. In contrast, ASGE operates when a workflow is already running, earlier bindings are fixed, and future dependency calls may not yet exist from the control perspective. ASGE is uniquely challenging because a decision optimal for the currently visible call may consume capacity later required by a higher-value workflow that has not yet been revealed.

To address this challenge, we develop ASGE-RR (ASGE with Revisable Reservations). ASGE-RR evaluates each currently available replica-and-path mapping by asking whether likely future workflow continuations would remain feasible after the decision is made. Rather than permanently reserving and binding resources, ASGE-RR makes {\em revisable reservations} that protect capacity for predicted future calls while allowing those reservations to be updated whenever new workflow information becomes available. In this way, ASGE-RR converts partial knowledge about future workflow structure into admission protection without violating hard capacity constraints. 

We make several contributions. First, we formulate ASGE and its complete-information benchmark. Second, we design ASGE-RR, an online ASGE controller with revisable reservations, and prove that every committed action satisfies resource and policy constraints. In addition, we evaluate ASGE-RR on captured OpenHands and GPT Researcher workflows in Docker and on a WAN. It is found that all the evaluated OpenHands and GPT Researcher tasks expose at least one runtime-revealed dependency call that can be steered before connection establishment. Exploiting this control point, ASGE-RR already demonstrates noticeable potential even under small-scale experiments: ASGE-RR completes (up to) 10\% more workflow value than a same-information rolling-horizon controller and a current-call steering controller on the WAN testbed, without adding network capacity.

\section{Motivation, Model and Benchmark}
\label{sec:model}

\subsection{The problem, challenge and opportunity}
We consider AI-agent workflows that call possibly replicated services across a network managed by an infrastructure operator. For each dependency call, the infrastructure operator can choose an equivalent replica and managed paths to and from it. By coordinating calls that share a workflow deadline, the network may help more valuable workflows finish on time without adding capacity. After completion, the related calls form the workflow's service graph. The replica-and-path selection for each call becomes a mapping of that service graph onto network resources. This is the ASGE problem.
While coordinating resource use among calls opens up an opportunity for the infrastructure operator to make better use of resources, this adaptability also creates a challenge: the next dependency call may appear only after earlier network choices are fixed. 
To demonstrate, consider an AI-agent workflow responding to a network service-level objective (SLO) violation. It collects telemetry, calls model and memory services to diagnose the cause, and asks a planner to propose an action. A verifier may accept the proposal, reject it and trigger more diagnosis, or send the approved action to a tool gateway. The possible steps may be known in advance, but the exact call sequence is not.

An illustration of the challenge and opportunity is provided in Fig.~\ref{fig:late-choice}, where there are two paths from the AI agent to the memory: a ``fast'' path and a ``slow'' path. It also shows two memory-related calls: an ordinary ``memory'' call that is available now, and a memory lookup call from the ``verifier'' that arrives later. For the ordinary memory call, either path can meet its requirement, but the verifier needs to use the fast path to finish by the workflow deadline.

For the challenge, Fig.~\ref{fig:late-choice}a shows that if the replica-and-path mapping is chosen solely based on current information, the fast path would be assigned to the ordinary memory call, causing the later verifier call to fail to meet the workflow deadline. For the opportunity, Fig.~\ref{fig:late-choice}b demonstrates that if there were a way to coordinate such that the ordinary memory call would use the slow path, then the later verifier would be able to use the fast path and both calls would finish on time. 

\begin{figure}[!t]
\centering
\begin{tikzpicture}[font=\scriptsize,>=Latex,x=1mm,y=1mm]
\tikzset{
  panel/.style={draw=asgegray!56,rounded corners=2pt,line width=.65pt,
    fill=asgegray!2},
  lane/.style={asgegray!55,line width=.8pt},
  tinylabel/.style={font=\fontsize{5.4}{6.0}\selectfont,
    text=asgegray!92,align=center}
}

\draw[panel] (0,0) rectangle (40.5,34.5);
\node[font=\scriptsize\bfseries] at (20.25,31.6) {(a) Current-only choice};
\node[tinylabel,anchor=east] at (9.0,22.0) {scarce};
\node[tinylabel,anchor=east] at (9.0,11.2) {alternate};
\draw[lane,rounded corners=1mm]
  (9.8,21.9)--(29.9,21.9)--(32.2,17.5);
\draw[lane,rounded corners=1mm]
  (9.8,11.1)--(29.9,11.1)--(32.2,15.5);
\pic[scale=.78] at (13.4,21.9) {asge memory};
\node[font=\tiny\bfseries,align=center] at (13.4,26.8) {memory now};
\draw[->,asgered,line width=1.8pt,rounded corners=1mm]
  (16.0,21.9)--(29.9,21.9)--(32.2,17.5);
\pic[scale=.72] at (13.4,15.8) {asge verifier};
\node[font=\tiny\bfseries,align=center] at (13.4,12.8) {verifier later};
\draw[->,asgered,dashed,line width=1pt]
  (16.0,16.2) .. controls (21.0,16.5) and (24.2,18.8) .. (27.6,18.8);
\node[font=\large,text=asgered] at (28.5,18.8) {$\times$};
\pic[scale=.78] at (35.1,16.5) {asge server};
\node[font=\fontsize{5.4}{6.0}\selectfont\bfseries,
  text=asgegray!92,align=center] at (20.25,4.2)
  {verifier misses the deadline};

\draw[panel] (43.5,0) rectangle (84,34.5);
\node[font=\scriptsize\bfseries] at (63.75,31.6) {(b) ASGE-RR};
\node[tinylabel] at (63.75,28.8) {revisable reservation};
\node[tinylabel,anchor=east] at (52.5,22.0) {scarce};
\node[tinylabel,anchor=east] at (52.5,11.2) {alternate};
\draw[asgeblue!70,line width=1.05pt,rounded corners=1mm]
  (53.3,21.9)--(73.4,21.9)--(75.9,17.5);
\draw[lane,rounded corners=1mm]
  (53.3,11.1)--(73.4,11.1)--(75.9,15.5);
\pic[scale=.7] at (68.0,25.0) {asge lock};
\draw[asgeblue,dashed,line width=.7pt] (68.0,22.9)--(68.0,21.9);
\pic[scale=.72] at (56.9,21.9) {asge verifier};
\node[font=\tiny\bfseries,align=center] at (56.9,26.3) {verifier later};
\draw[->,asgeblue,line width=1.6pt,rounded corners=1mm]
  (59.4,21.9)--(73.4,21.9)--(75.9,17.5);
\pic[scale=.78] at (56.9,11.1) {asge memory};
\node[font=\tiny\bfseries,align=center] at (56.9,15.9) {memory now};
\draw[->,asgegray!82,densely dashed,line width=1.6pt,rounded corners=1mm]
  (59.5,11.1)--(73.4,11.1)--(75.9,15.5);
\pic[scale=.78] at (78.6,16.5) {asge server selected};
\node[font=\scriptsize\bfseries] at (72.6,25.3) {$\checkmark$};
\node[font=\fontsize{5.4}{6.0}\selectfont\bfseries,align=center] at (63.75,4.2)
  {both calls finish on time};

\end{tikzpicture}
\caption{Conceptual illustration of the challenge and opportunity.}
\label{fig:late-choice}
\end{figure}

\subsection{System model}
To formally describe and study ASGE, we first introduce the system model in this section and then construct a benchmark in Section~\ref{subsec:benchmark}.

\subsubsection{Network and service substrate}

The network is represented by a directed graph \(G=(V, E)\), where $V$ denotes the set of nodes and $E$ the set of links of the network. Link \(e \in E\) has capacity \(U_e\,[\mathrm{bit/s}]\) and one-way delay \(\delta_e\,[\mathrm{s}]\). For directed path \(p\), let \(\mathbf 1_e(p)\) be one when \(p\) uses link \(e\). Its delay \(\delta(p)\) can be calculated as \(\delta(p)=\sum_{e\in p}\delta_e\). 
Each service type \(d\) may be served by a replica in the set \(\mathcal U_d\). Replica \(u\) resides at node \(n(u) \in V\) and provides service capacity \(Q_u\) measured in service-work units per second [service-work units/s]. 

\subsubsection{Workflow replay}

Note that for an AI-agent task, its exact workflow is available only after execution is complete and depends on network-control decisions. To isolate network control from agent behavior, we use workflow replay: arrivals, branch outcomes, dependency-call releases, request and response sizes, call execution behavior, and realized service times are fixed from the completed execution. This ensures that all policies face identical workflow behavior and allows differences in outcome to be attributed solely to network control. 

AI-agent workflow \(r\) arrives at time \(A_r\) and has deadline budget \(L_r\) with value \(\nu_r\). The absolute deadline is \(A_r+L_r\). 
Let \(\mathcal D_r^\zeta\) denote the set of dependency calls observed in replay \(\zeta\), where repeated invocations remain distinct calls. 

\subsubsection{Delay calculation} 
For call \(c \in \mathcal D_r^\zeta\), let \(v_{rc}\) denote the issuing host, \(d(c)\) the called service type, \(m_{rc}^{\rightarrow}\) [bit] the request size, \(m_{rc}^{\leftarrow}\) the response size [bit], \(b_{rc}^{\rightarrow}\) [bit/s] the reserved transfer rate on the path from the host to the service, and \(b_{rc}^{\leftarrow}\) [bit/s] the reserved transfer rate on the reverse path. A candidate replica-and-path mapping for the call is denoted as
\[
 \gamma=(u,p^{\rightarrow},p^{\leftarrow})
 \in\mathcal G_{rc}(v_{rc}),
\]
which selects replica \(u\in\mathcal U_{d(c)}\), request path \(p^{\rightarrow}\) from
\(v_{rc}\) to \(n(u)\), and response path \(p^{\leftarrow}\), where $\mathcal G_{rc}(v_{rc})$ denotes the set of possible mappings and the two paths \(p^{\rightarrow}\) and \(p^{\leftarrow}\) may differ. 

At the call release, the following information is assumed to be known, e.g. via measurements: whether replica \(u\) is available (\(\bar w_{rcu}=1\)), an upper bound on required service work \(\widehat w_{rcu}\), and an upper bound on setup time \(\kappa_{rcu}\) [s]. Suppose replica \(u\) can support service rate \(\lambda_{rcu}\) [service-work units/s] to the call. Then, these values give the service-time bound 
\(\widehat s_{rcu}=\widehat w_{rcu}/\lambda_{rcu}
+\kappa_{rcu}\). 
Ignoring queueing, the corresponding call completion time is bounded by 
\begin{align}
 d_{rc}(\gamma)={}&d^{ctrl}_{rc}
 +\frac{m_{rc}^{\rightarrow}}{b_{rc}^{\rightarrow}}
 +\delta(p^{\rightarrow})+\widehat s_{rcu}\nonumber\\
 &+\frac{m_{rc}^{\leftarrow}}{b_{rc}^{\leftarrow}}
 +\delta(p^{\leftarrow}),                                           \label{eq:call-delay}
\end{align}
where \(d^{ctrl}_{rc}\) represents the controller and path establishment delay. This delay duration \(d_{rc}(\gamma)\) includes controller processing, data transfer on request and response paths, service execution, and delay of both paths.

\subsection{Complete-information benchmark}
\label{subsec:benchmark}

For a completed replay \(\zeta\), all branch outcomes, dependency
calls, and realized service times are known. Each workflow can
therefore be represented as a fixed, path-based, time-indexed DAG.
We use this information to define a noncausal benchmark: the best
joint replica, path, waiting, and phase schedule on the same finite
time grid used by the online controller.

For workflow \(r\), let \(\mathcal K_r^\zeta\) be the finite set of
complete configurations satisfying replica eligibility, path policy,
stage precedence, and non-preemption. Configuration \(k\) fixes the
replica and paths of every call, any deliberate
waiting, and the request, setup/service, and response phase schedule.
Let \(T_{rk}\) be its arrival-relative completion time and
\(K_{rk}\geq 0\) its total cost under the per-action cost rule
\(C(\cdot)\). Its schedule induces directed-link load
\(B_{rk,e\ell}\) and replica service-rate load \(D_{rk,u\ell}\) in
time bucket \(\ell\). The former includes fixed-route workflow
messages and charges request and response bandwidth only while the
corresponding phase executes. Background loads
\(\bar B_{e\ell}\) and \(\bar D_{u\ell}\) are common to all policies.
Below, \(\sum_{r,k}\) abbreviates
\(\sum_r\sum_{k\in\mathcal K_r^\zeta}\).

Binary \(z_{rk}\) selects configuration \(k\), and \(a_r\) indicates
that workflow \(r\) completes by its deadline. The complete-information
benchmark is
\begingroup
\begin{align}
 \underset{a,z}{\operatorname{lexmax}}\quad
 &\left(
 \sum_r\nu_ra_r,\,
 -\sum_{r,k}T_{rk}z_{rk},\,
 -\sum_{r,k}K_{rk}z_{rk}
 \right)                                                   \label{eq:cfg-objective}\\
 \text{s.t.}\quad
 &\sum_{k\in\mathcal K_r^\zeta}z_{rk}=a_r,
 &&\forall r,                                               \label{eq:cfg-one}\\
 &\bar B_{e\ell}+\sum_{r,k}B_{rk,e\ell}z_{rk}\leq U_e,
 &&\forall e,\ell,                                          \label{eq:cfg-link}\\
 &\bar D_{u\ell}+\sum_{r,k}D_{rk,u\ell}z_{rk}\leq Q_u,
 &&\forall u,\ell,                                          \label{eq:cfg-service}\\
 &\sum_{k\in\mathcal K_r^\zeta}T_{rk}z_{rk}\leq L_ra_r,
 &&\forall r,                                               \label{eq:cfg-deadline}\\
 &z_{rk},a_r\in\{0,1\}.                                     \label{eq:cfg-domain}
\end{align}
\endgroup
The objective first maximizes priority-weighted workflow value
completed by deadline, then minimizes completion time and resource
cost. If each configuration catalog contains every assignment
satisfying the stated per-workflow constraints, 
Eqs.~\eqref{eq:cfg-objective}--\eqref{eq:cfg-domain} give the exact
complete-information optimum for replay \(\zeta\). The benchmark is
not implementable online because future calls and service
times have not yet been revealed.

\subsection{Online ASGE}
The complete-information benchmark assumes knowledge that a live controller lacks, so it is not implementable in practice. In the following, we focus on the online ASGE problem that begins after workflow execution has started and only a prefix of the workflow has been observed.

\subsubsection{Information available at decision time}

The live controller operates from the observed history, remaining capacity, fixed earlier choices, reservations for possible later calls, policy, and readiness. At each event, it first frees capacity that has ended its use. Events include call arrival, fixed stage start/end, branch outcome, call release/return, and termination. If a call appears, it chooses a replica and paths or waits, then books the call's three phases, i.e., the request, service, and response phases. We record all observations in time order. 

For replay \(\zeta\), policy \(\Pi\) produces \(\omega_\Pi(\zeta)=((t_1,e_1),\ldots,(t_T,e_T))\). 
Let \(\mathcal F_\tau=\sigma((t_1,e_1),\ldots,(t_\tau,e_\tau))\) be the history through event \(\tau\). At decision time $\tau$ for a call, the available information or decision state is represented by  
\begin{align}
 I_\tau=(\mathcal F_\tau,\widetilde S_\tau,X_\tau,A_\tau,R_\tau,
          P_\tau,\bar W_\tau),                                      \label{eq:information-state}
\end{align}
where \(\widetilde S_\tau\) records remaining capacity;
\(X_\tau\) contains fixed bindings; \(A_\tau\) contains hard allocations; \(R_\tau\) contains reservations; and \(P_\tau,\bar W_\tau\) represents current policy $P_\tau$ and 
readiness information \(\bar W_\tau\). 

A policy is {\em causal} if decision $d_\tau$ depends only on $I_\tau$:
\begin{align}
 d_\tau=\Pi(I_\tau),                                                  \label{eq:causal-policy}
\end{align}

As a highlight and difference from the benchmark, future observations may influence later decisions here but cannot affect decisions already made. This information asymmetry defines the online ASGE problem. 

\subsubsection{Capacity evolution}
When a decision is committed, the controller binds each call before dispatch and keeps that binding fixed until the response returns. In addition, the controller books resource usage into a capacity calendar that records how available capacity of a service replica or a link evolves. 

Let \(g\) index a time bucket. In bucket \(g\), the maximum available capacity is \(c_{eg}=U_e-\bar B_{eg}\) for link $e$ and \(c_{ug}=Q_u-\bar D_{ug}\) for service replica $u$, where $\bar B_{eg}$ and $\bar D_{ug}$ respectively denote the background load on the link or the replica. To simplify representation, we use \(j \in \{e, u\}\) to index either a link or a service replica in the following.  Let \(\rho_{jg}(e_\tau)\) be load released or expired at the event and \(u_{jg}(d_\tau)\) denote the action's phase calendar, where $j \in \{e, u\}$. 

The transition of load on link $e$ (by letting $j=e$) or  replica $u$ (by letting $j=u$) can be represented by 
\begin{align}
 \widetilde A_{\tau,jg}&=A_{\tau-1,jg}-\rho_{jg}(e_\tau),\qquad
 \widetilde A_{\tau,jg}+u_{jg}(d_\tau)\le c_{jg},                    \label{eq:release}\\
 A_{\tau,jg}&=\widetilde A_{\tau,jg}+u_{jg}(d_\tau),\quad\forall j,g. \label{eq:transition}
\end{align}
The remaining capacity is then 
\(\widetilde S_{\tau,jg}=c_{jg}-\widetilde A_{\tau,jg}\). 

As transactions finish, their allocations are released. Resource updates follow Eqs. (12)-(13), ensuring that committed allocations never exceed available capacity.
As a remark, every allocation uses a half-open phase interval. A call that finishes midway through a bucket releases capacity at the next boundary, which makes rounding conservative. As time advances, expired buckets disappear from the controller's relative view.

\subsubsection{Evaluation metrics} 
As in the complete-information benchmark, we evaluate policies using three quantities: completed workflow value, completion time, and resource cost. For replay \(\zeta\),
the corresponding lexicographic performance vector is defined as 
\begin{align}
 \mathbf J_\Pi(\zeta)=\Big(&
 \sum_r\nu_r\mathbf 1\{T_r^\Pi\le L_r\},\nonumber\\[-1mm]
 &-\sum_{r:T_r^\Pi\le L_r}T_r^\Pi,
 -\sum_\tau C(d_\tau)\Big)                                      \label{eq:trace-objective}
\end{align}
where, if a request is incomplete or dropped, set 
\(T_r^\Pi=\infty\).

Reporting these components separately makes the trade-offs visible: a policy may complete more value, finish workflows sooner, or consume fewer resources.

\subsubsection{Fundamental limitation}

One might hope that a deterministic online controller could guarantee near-optimal performance on every workload. Unfortunately, this is impossible without additional assumptions. No deterministic causal policy for ASGE can admit a positive workload-independent competitive ratio. Recall the example shown in Fig.~\ref{fig:late-choice}, where it is shown that committing capacity to an earlier revealed call blocks a later call from a higher-value workflow revealed after the commitment. A controller with complete information can complete both workflows, whereas any deterministic online policy can be forced arbitrarily far from optimal. 

Consequently, online ASGE cannot rely on worst-case competitiveness. So, instead, we focus on two properties:
\begin{itemize}
    \item hard feasibility guarantees, and
    \item measured performance relative to the complete-information benchmark.
\end{itemize}

The complete-information benchmark defines what is achievable with full knowledge of workflow evolution. The challenge is approaching that benchmark while observing only runtime-revealed calls. The next section presents ASGE-RR, which addresses this challenge through revisable reservations and future-aware admission decisions.

\section{\asgerr: ASGE with Revisable Reservations}
\label{sec:algorithm}

\subsection{Design overview}
ASGE-RR addresses the failure mode in Fig. 2 through revisable reservations: it protects capacity for likely future calls and revises that protection whenever new workflow information arrives. At each call-release event, it performs four steps:
\begin{itemize}
    \item Generate feasible actions: Construct all policy-compliant replica-and-path choices for the newly revealed call.
    \item Forecast plausible workflow continuations: Estimate a bounded set of possible future call sequences that may still occur.
    \item Compute revisable reservations: Determine how much capacity should be protected for those future possibilities.
    \item Select an action: Choose the feasible action that best balances deadline completion, resource cost, and future flexibility.
\end{itemize}

Importantly, only the selected action becomes committed, and only issued calls consume actual resources. However, for resource reservation, ASGE-RR does not permanently reserve capacity; reservations remain in a soft state that guides admission decisions and may change at the next event. As a result, forecast errors may reduce efficiency but cannot violate capacity constraints.

\subsection{Capacity calendar}
ASGE-RR represents link and service-replica capacity using a time-indexed resource calendar. 
Specifically, ASGE-RR divides the time before a workflow deadline into capacity slots. 
Let \(\mathcal J\) index directed links and service-replica resources. 
For each resource $j \in \mathcal J$, which is either a directed link $e$ or a service-replica resource $u$, and time slot $\tau$, the calendar tracks: (1) committed load from issued calls, (2) reserved load for possible future calls, and (3) remaining available capacity.

The calendar is updated every time \(t_\tau\) when a call completes, a new call is revealed, a workflow terminates, or a reservation changes. By maintaining explicit resource occupancy through time, the controller can reason about whether current decisions leave sufficient capacity for future workflow execution. 

When a dependency call is released, ASGE-RR generates all legal mappings. Each candidate mapping specifies 
\[
 \gamma=(u,p^{\rightarrow},p^{\leftarrow}),
\]
and creates a calendar footprint consisting of: request-path bandwidth, replica service usage, and response-path bandwidth. This footprint becomes the basis for later feasibility checks. 

The controller also considers a strategic waiting action, $\gamma_{wait}$,
which postpones dispatch until the next scheduling boundary. Including wait actions is important because immediate service is not always optimal. Under contention, briefly delaying one call can enable multiple workflows to meet their deadlines.

Specifically, the resource allocation uses an absolute time grid with boundaries \(g\Delta\). At time \(t_\tau\), \(g_\tau(1),\ldots,g_\tau(H_{r\tau})\) are the absolute buckets that overlap workflow \(r\)'s remaining interval \([t_\tau,A_r+L_r)\), viewed as relative slots. ASGE-RR charges a partial first bucket in full: \(H_{r\tau}=|\{g:[g\Delta,(g+1)\Delta)\cap[t_\tau,A_r+L_r)\ne\emptyset\}|\); \(H\) below is its workload maximum.

The remaining capacity in this view is \(S_{\tau,j\ell}=c_{j,g_\tau(\ell)}- \widetilde A_{\tau,j,g_\tau(\ell)}\). It already considers issued calls and fixed workflow messages. Candidate \(\gamma\) has a sparse calendar \(u_{\gamma,j\ell}\). It books bandwidth on the request path, \(\lambda_{rcu}\) at the replica during service, and bandwidth on the response path. The action set also includes \(\gamma^{wait}(t^+)\). This action leaves the call undispatched until the first grid boundary at or after the earliest known release \(t^+\), then shifts the candidate calendar to that boundary.

\subsection{Forecasting future calls}
Future calls are unknown at the time of decision. ASGE-RR therefore reasons over a set of plausible workflow continuations. 
We call each plausible workflow continuation or possible remaining call sequence a \emph{suffix}. Examples include: a verifier accepting a plan, a verifier rejecting a plan, a planner generating an additional tool call, a retrieval step causing another model invocation, and workflow termination.

ASGE-RR assigns probabilities to these suffixes using traces collected from previous workflow executions. To keep computation bounded, only the highest-probability suffixes are retained. A configurable coverage threshold determines whether enough probability mass remains. If too much uncertainty exists, ASGE-RR abandons reservations and reverts to a myopic strategy. This fallback mechanism prevents the controller from making aggressive reservation decisions on weak evidence.

Specifically, for workflow $r$, let \(\Sigma_r(q_{r,\tau})\) denote the set of suffixes consistent with the observed workflow state at time $\tau$. ASGE-RR keeps at most \(Z\) suffixes in \(\Sigma_r(q_{r,\tau})\). Their original probability mass is \(m_{r\tau}=\sum_{\sigma\in\Sigma_r}\widehat
p_{r,\tau}(\sigma)\), where \(\widehat
p_{r,\tau}(\sigma)\) denotes the probability of suffix $\sigma$ happening in \(\Sigma_r(q_{r,\tau})\). Let \(\alpha_r\in[0,1]\) be the largest allowed omitted mass. If \(m_{r\tau}<1-\alpha_r\), the forecast omits too much probability and the method falls back to myopic control. Otherwise, it normalizes the retained weights: \(\widetilde p_{r,\tau}(\sigma)=\widehat p_{r,\tau}(\sigma)/m_{r\tau}\). The weights must be learned from traces that are disjoint from the evaluation set.

For each current choice, a bounded search simulates these possible futures.
The search remains causal: futures with the same observed history must make
the same choice until a new observation separates them. If two sequences begin
with the same memory call, for example, they use the same simulated replica
choice until a later verifier result appears. A scenario-tree beam search produces a
future resource calendar \(v_{r\sigma,j\ell}(\gamma)\) and absolute completion estimate \(\widehat C_{r\sigma}(\gamma)\) for each retained suffix, including future
fixed-route workflow messages. A suffix extending beyond slot \(H_{r\tau}\)
receives \(\widehat C_{r\sigma}=\infty\) and holds its capacity bound through
the deadline. We retain candidates that fit every retained scenario.

\subsection{Forecasting future capacity demand and completion}
The retained sequences of \(\Sigma_r(q_{r,\tau})\) are treated as alternatives rather than simultaneous demand. ASGE-RR uses a weighted quantile to choose how much capacity to protect in each resource-time slot and another to estimate completion time. 

Let \(\epsilon_r,\beta_r\in[0,1]\) be the resource and completion tail levels, and
let \(Q_q\) be a weighted \(q\)-quantile over the retained suffixes.
\(R_{r\tau,j\ell}(\gamma)\) summarizes forecast demand for one resource and
slot; \(\widehat C^\beta_{r\tau}(\gamma)\) summarizes forecast completion:
\begin{align}
 R_{r\tau,j\ell}(\gamma)=Q_{1-\epsilon_r}
 \!\left(\{v_{r\sigma,j\ell}(\gamma):
 \sigma\sim\widetilde p_{r,\tau}(\cdot\mid q_{r,\tau})\}\right),      \label{eq:reservation}\\
 \widehat C^\beta_{r\tau}(\gamma)=Q_{1-\beta_r}
 \!\left(\{\widehat C_{r\sigma}(\gamma):
 \sigma\sim\widetilde p_{r,\tau}(\cdot\mid q_{r,\tau})\}\right).
\label{eq:completion-quantile}
\end{align}
\(R_{r\tau,j\ell}(\gamma)\) and \(\widehat C^\beta_{r\tau}(\gamma)\) represent the amount of capacity to be protected for how long for future calls if action $\gamma$ is selected.

Together, these parameters provide a tunable tradeoff between resource efficiency and deadline robustness. Smaller \(\epsilon_r\) and \(\beta_r\) make both tests more conservative. These are marginal per-resource quantiles. At \(\epsilon_r=0\), ASGE-RR
protects the greatest demand for each resource and slot while keeping mutually
exclusive branches separate. Probability mass omitted from \(\Sigma_r\) counts as late in \(\widehat p_{r\tau}^{\mathrm{on}}(\gamma)=
\sum_{\sigma\in\Sigma_r}\widehat p_{r,\tau}(\sigma)
\mathbf 1\{\widehat C_{r\sigma}(\gamma)\le A_r+L_r\}\). Without a usable
template, \(R_{r\tau}=0\), and the completion test uses a configured lower
bound on remaining work. The method is then myopic.

\subsection{Feasibility tests}

The controller next eliminates actions that are unlikely to succeed. A candidate mapping $\gamma$ must satisfy resource feasibility and deadline feasibility. The former ensures that the current load, existing reservations, and new reservations must all fit within available capacity. The latter is to satisfy the estimated completion time. Together, these two tests function as a future-aware admission control mechanism.

Let \(R_{-r,\tau}\) be the capacity protected for other workflows. A current
choice must pass two tests. Its call and all relevant reservations must fit in
every resource-time slot, and its estimated completion must meet the deadline:
\begin{align}
 u_{\gamma,j\ell}+R_{r\tau,j\ell}(\gamma)+R_{-r,\tau,j\ell}
 &\le S_{\tau,j\ell}, &&\forall j,\ell,                              \label{eq:protected}\\
\widehat C^\beta_{r\tau}(\gamma)&\le A_r+L_r.                        \label{eq:deadline-screen}
\end{align}

These forecast tests retain choices predicted to fit and finish. Without a
workflow template, Eq.~\eqref{eq:deadline-screen} uses a lower bound on
remaining work to screen calls whose earliest possible completion exceeds the
deadline.

\subsection{Reservation arbitration}
For workflow \(r\) and feasible candidate mapping \(\gamma\) at time \(\tau\), a reservation priority \(\pi_{r\tau}(\gamma)\) is calculated: 
\begin{align}
 \pi_{r\tau}(\gamma)=\frac{\nu_r\widehat p_{r\tau}^{\,\mathrm{on}}(\gamma)}
 {\max\{\epsilon_t,A_r+L_r-t_\tau\}}                               \label{eq:priority}
\end{align}
where \(\epsilon_t>0\) is a deadline guard measured in seconds.

This priority rises with workflow value, predicted on-time probability, and urgency. The terms prefer earlier completion, lower cost, and less use of scarce capacity. A valuable workflow near its deadline therefore has a higher-priority reservation. If reservations for other workflows block all current actions, ASGE-RR releases lower-priority reservations in increasing order of priority until a feasible action appears, or until no more reservations can be removed. This policy prevents low-priority speculative reservations from blocking high-priority work.

\subsection{Action ranking}
Multiple feasible actions may remain after screening. Rather than choosing the earliest completion time alone, ASGE-RR scores each candidate based on three factors: completion, resource cost, and resource scarcity for future calls. The resulting score balances short-term performance and future flexibility. While a myopic controller focuses almost entirely on the first, ASGE-RR explicitly accounts for all three.

Once an action fits, a score $\operatorname{score}_\tau(\gamma)$ ranks the remaining choices. Let \(K_\tau(\gamma)\) use the same per-action cost rule
\(C(\cdot)\) as the complete-information benchmark to candidate \(\gamma\)'s current-call calendar
\(u_\gamma\) and controller charge. It excludes the soft reservation
\(R_{r\tau}(\gamma)\). Define normalized completion
\(\bar C_{r\tau}(\gamma)=(\widehat C^\beta_{r\tau}(\gamma)-t_\tau)/
\max\{\epsilon_t,A_r+L_r-t_\tau\}\) and normalized cost
\(\bar K_\tau(\gamma)=K_\tau(\gamma)/K_0\). We tune positive scale \(K_0\)
and weights \(\lambda_K,\lambda_R\ge0\) on separate calibration traces.
Per-resource guard \(\eta_j>0\), expressed in resource \(j\)'s units, prevents
division by zero. The score is calculated as
\begin{align}
 \operatorname{score}_\tau(\gamma)={}&\bar C_{r\tau}(\gamma)
 +\lambda_K\bar K_\tau(\gamma)\nonumber\\
 &+\lambda_R\sum_{j,\ell}
 \frac{u_{\gamma,j\ell}+R_{r\tau,j\ell}(\gamma)}
 {\eta_j+S_{\tau,j\ell}-R_{-r,\tau,j\ell}}.                          \label{eq:score}
\end{align}

\subsection{The algorithm and operational behavior}

Algorithm~\ref{alg:asge-rr} summarizes the search, feasibility, arbitration, and commitment procedure. Only the selected action creates hard load; reservations remain soft and are recomputed whenever workflow information changes.

\begin{algorithm}[t]
\caption{\asgerr}
\label{alg:asge-rr}
{\small
\setlength{\algorithmicindent}{.8em}
\begin{algorithmic}[1]
\STATE Precompute \(K\) policy-feasible paths per directed endpoint pair.
\FOR{event \((t_\tau,e_\tau)\) in timestamp order}
  \STATE Release ended use; update traffic, observations, reservations.
  \IF{\(e_\tau\) reveals an undispatched dependency call}
    \STATE Clear \(r\)'s reservation; generate mapping/wait actions.
    \FOR{action \(\gamma\)}
      \STATE Search possible futures; set \(R=0\) if coverage is insufficient.
      \STATE Compute \(R_{r\tau}(\gamma)\), \(\widehat C^\beta_{r\tau}(\gamma)\); reject if a future is infeasible.
    \ENDFOR
    \STATE Keep actions passing Eqs.~\eqref{eq:protected} and \eqref{eq:deadline-screen}.
    \WHILE{no action remains and a reservation is releasable}
      \STATE Release the lowest-priority reservation; test again.
    \ENDWHILE
    \IF{an action remains}
      \STATE Choose minimum-score \(\gamma\); commit its binding or queue wait.
    \ELSE
      \STATE Drop \(r\); no action passes both tests.
    \ENDIF
  \ENDIF
\ENDFOR
\end{algorithmic}
}
\end{algorithm}

\subsection{Safety guarantee and implementation complexity}

Since reservations in ASGE-RR never consume capacity on their own, forecast errors may reduce performance. Still, they cannot violate capacity constraints, which provides a safety guarantee and is summarized in the following theorem. 

\begin{theorem}
    If the initial hard calendar is feasible and realized demands stay within
their declared bounds, every ASGE-RR commitment satisfies capacity and policy
constraints and preserves earlier bindings, independent of forecast accuracy. 
\end{theorem}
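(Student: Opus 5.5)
We argue by induction over the event sequence \(\omega_\Pi(\zeta)=((t_1,e_1),\ldots,(t_T,e_T))\) processed by Algorithm~\ref{alg:asge-rr}. The invariant is: after event \(\tau\), the hard calendar satisfies \(0\le A_{\tau,jg}\le c_{jg}\) for every resource \(j\in\mathcal J\) and bucket \(g\), every binding in \(X_\tau\) is policy-compliant, and \(X_{\tau-1}\subseteq X_\tau\) with identical replica and path choices for every previously bound call. The base case is the hypothesis that the initial hard calendar is feasible, with no bindings yet. The central observation is that soft reservations \(R\) never appear in the hard state \(A_\tau\). By Eqs.~\eqref{eq:release}--\eqref{eq:transition}, only the committed action's phase calendar \(u_{jg}(d_\tau)\) is ever added. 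Forecast quantities such as \(\widehat p\), \(\Sigma_r\), \(R_{r\tau}\), \(R_{-r,\tau}\) and \(\widehat C^\beta\) enter only through the filters, the arbitration, and the score. Whatever values they take therefore does not matter for the invariant, which gives independence from forecast accuracy.

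For the inductive step we split into three cases according to what the event does. \emph{(i) Release-only events:} the update \(\widetilde A_{\tau}=A_{\tau-1}-\rho(e_\tau)\) only decreases load. It stays nonnegative because \(\rho\) removes exactly load that was previously booked, and because half-open intervals release at the next boundary, making the rounding conservative. Capacity therefore still holds. Bindings are untouched, since release happens only when the response has returned. \emph{(ii) A call is revealed and a mapping \(\gamma\) is committed:} the action set is built only from \(\mathcal G_{rc}(v_{rc})\) with the precomputed policy-feasible paths and eligible replicas (\(\bar w_{rcu}=1\)), so policy compliance holds by construction. For capacity, \(\gamma\) passed Eq.~\eqref{eq:protected}, possibly after some reservations were released during arbitration. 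Since \(R_{r\tau}(\gamma)\ge0\) and \(R_{-r,\tau}\ge0\) for whatever remained after arbitration, we get \(u_{\gamma,j\ell}\le S_{\tau,j\ell}=c_{j,g_\tau(\ell)}-\widetilde A_{\tau,j,g_\tau(\ell)}\). This is exactly the admissibility condition in Eq.~\eqref{eq:release}, so Eq.~\eqref{eq:transition} yields \(A_{\tau,jg}\le c_{jg}\). Only the new call's binding is added, and line 5's clearing of \(r\)'s reservation affects only soft state. \emph{(iii) Wait or drop:} no hard load is added, so the invariant carries over trivially. A queued wait is re-examined at its later event and then falls under case (ii).

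The remaining care point, and the step I expect to be the main obstacle, is connecting the booked calendar to the load actually realized. The check uses declared upper bounds, namely \(\widehat w_{rcu}\), \(\kappa_{rcu}\), reserved rates \(b^{\rightarrow},b^{\leftarrow}\), and \(\lambda_{rcu}\), and bucket rounding charges partial buckets in full. We must argue that realized per-bucket usage is dominated by \(u_{\gamma}\) under the hypothesis that realized demands stay within their declared bounds. I would show this phase by phase. Transfer rates are capped at the reserved \(b\). Service rate is capped at \(\lambda_{rcu}\). Each phase's realized duration is no longer than its booked interval, because actual work is at most \(\widehat w_{rcu}\) and actual setup time is at most \(\kappa_{rcu}\), and the interval is rounded outward. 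Hence true load is at most booked load in every bucket, which is at most \(c_{jg}\). A secondary subtlety is that a call might not finish within its booked horizon. Under the bounded-demand hypothesis this cannot happen, and otherwise the argument would need an extension rule, which the hypothesis excludes.
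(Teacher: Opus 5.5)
Your proposal is correct and follows the paper's own induction almost step for step: release cannot raise load, nonnegativity of all reservations turns Eq.~\eqref{eq:protected} into $u_{\gamma,j\ell}\le S_{\tau,j\ell}$ so that Eq.~\eqref{eq:transition} preserves capacity, candidate generation enforces policy, reservations add no hard load, issued bindings never change, and declared demand bounds keep realized use within its booking. The one place you diverge is waits: the paper closes that case by noting that the wait action's time-shifted calendar passes the same test, Eq.~\eqref{eq:protected}. You instead treat a wait as booking nothing and rely on the queued call being re-screened at a later event, which Algorithm~\ref{alg:asge-rr} does not state. It is therefore safer to handle $\gamma^{wait}(t^+)$ as your case (ii) with a shifted footprint.
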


\begin{proof}
    Induct on events. Release cannot increase load. Since reservations are
nonnegative, Eq.~\eqref{eq:protected} implies
\(u_{\gamma,j\ell}\le S_{\tau,j\ell}\), so Eq.~\eqref{eq:transition} preserves
capacity. Candidate generation enforces policy; waits pass the same shifted
calendar test; issued bindings never change. Reservations add no hard load,
and an action that still cannot fit is dropped. Declared demand bounds prevent
realized use from exceeding its booking.
\end{proof}

For complexity of Algorithm 1, let \(P_{\mathrm{path}}\) be path-precomputation cost,
\(B_d\) the largest number of eligible replicas, and \(K\) the largest number
of paths in each direction. Then \(B=B_dK^2+1\) bounds the mapping and wait
actions. Let \(Z\) be the number of retained suffixes, \(H\) the number of
time slots, \(W\) the beam width, and \(L\) the number of nonzero
resource-time entries per expansion. One candidate requires at most \(ZHWB\)
beam expansions. Therefore, \(T_d\) call decisions with at most \(R_c\)
reservation-release passes cost
\begin{align}
 O\!\left(P_{\mathrm{path}}+
 T_dR_cZHWB^2[L+\log(WB)]\right).                                    \label{eq:complexity}
\end{align}
Working memory is \(O(ZWHL+NHL+BL)\) for \(N\) active reservation profiles.
Capacity and policy checks are exact for each generated current choice.

\section{Evaluation}
\label{sec:evaluation}

We evaluate ASGE-RR in two complementary experimental environments: a controlled Docker testbed and a five-VM WAN testbed. We run fixed OpenHands and GPT Researcher runtimes and use the resulting AI-agent workflows in both setups.

\subsection{Real AI-agent runtimes and the pre-connection control point}
\label{sec:evaluation.A}

We run fixed OpenHands SDK 1.28.1~\cite{wang2025openhands} and GPT Researcher
3.5.1~\cite{elovic2026gptresearcher} versions on 28 public tasks chosen before seeing model results. OpenHands receives 14 repository tasks; GPT Researcher
receives 14 public research questions. 
For each runtime, 4 tasks are used to train the forecasts and 10 to form the evaluation set. 
We use \texttt{gpt-5.6-luna} for model responses; GPT Researcher also uses \texttt{text-embedding-3-small} for embeddings.

Fig.~\ref{fig:wan-topology} summarizes the experimental replay pipeline. We first run each task with live model output and record when calls appear. After the run, we freeze the call sequence and exact request and response bytes. This creates a controlled workload on which we can run paired experiments: every compared controller receives the same workflow behavior. At the same time, each call remains hidden until replay reaches its release event. This replay pipeline allows us to isolate effects attributable to network policies rather than model variations in the AI-agent workflow responses.

Firstly, we answer the question: \textit{Do real AI-agent workflows expose the pre-connection control point needed in ASGE?} To answer the question, we performed a closer investigation on the runtimes of the 20 evaluation tasks, conducted from the pipeline in Fig.~\ref{fig:wan-topology}a. Results are presented in Fig.~\ref{fig:wan-occurrence}. All 20 evaluation tasks reveal at least one result-conditioned call that can be assigned to a replica and paths before connection (Wilson 95\% interval 0.839--1.000). At the call level, this holds for 75/87 OpenHands calls and 19/37 GPT Researcher calls, for a total of 94/124. The needed control point and the proposed opportunity, therefore, exist: The network controller can wait until it knows what service the workflow needs, then select the replica and paths before connection.

\begin{figure}[t]
\centering
\begin{tikzpicture}[
  font=\scriptsize,
  >=Latex,
  x=1mm,
  y=1mm,
  line cap=round,
  line join=round
]
\tikzset{
  panel/.style={draw=asgegray!52,rounded corners=3pt,line width=.65pt,
    fill=asgegray!2},
  region/.style={draw=asgegray!38,rounded corners=3pt,line width=.55pt,
    fill=white},
  helsinki/.style={draw=asgeblue!30,rounded corners=3pt,line width=.55pt,
    fill=asgeblue!3},
  flow/.style={-{Latex[length=1.3mm]},asgegray!72,line width=.76pt},
  selected/.style={-{Latex[length=1.4mm]},asgeblue,line width=1.2pt},
  smalllabel/.style={font=\fontsize{5.2}{5.8}\selectfont,
    text=asgegray!92,align=center},
  role/.style={font=\fontsize{5.3}{5.9}\selectfont\bfseries,
    text=black,align=center},
  pill/.style={draw=asgeblue!38,rounded corners=4pt,fill=white,
    line width=.55pt,font=\fontsize{5.2}{5.8}\selectfont\bfseries,
    text=black,inner xsep=2.0mm,inner ysep=.7mm}
}

\draw[panel] (0,29.2) rectangle (84,50);
\node[anchor=west,font=\scriptsize\bfseries] at (2.3,47.1)
  {(a) Capture calls from live AI-agent tasks};

\pic[scale=.62] at (6.6,40.5) {asge document};
\node[anchor=west,font=\fontsize{5.5}{6}\selectfont\bfseries]
  at (10.0,41.8) {OpenHands};
\node[anchor=west,smalllabel] at (10.0,39.2) {14 public tasks};

\pic[scale=.62] at (6.6,33.6) {asge document};
\node[anchor=west,font=\fontsize{5.5}{6}\selectfont\bfseries]
  at (10.0,34.9) {GPT Researcher};
\node[anchor=west,smalllabel] at (10.0,32.3) {14 public tasks};

\draw[flow] (25.4,40.3)
  .. controls (29.0,40.3) and (31.5,38.0) .. (34.0,37.2);
\draw[flow] (25.4,33.8)
  .. controls (29.0,33.8) and (31.5,36.0) .. (34.0,36.7);
\pic[scale=.78] at (38.0,37.0) {asge model};
\node[font=\fontsize{5.5}{6}\selectfont\bfseries\ttfamily,
  text=black,align=center] at (38.0,31.8)
  {gpt-5.6-luna};

\draw[flow] (41.4,37.0)
  .. controls (45.0,37.0) and (49.0,39.2) .. (54.0,39.2);
\node[smalllabel,fill=white,inner sep=.6pt] at (47.7,41.0)
  {calls appear\\during execution};

\draw[draw=asgeblue!28,fill=asgeblue!3,rounded corners=3pt,line width=.55pt]
  (54.0,31.4) rectangle (81.5,43.7);
\pic[scale=.55] at (58.1,35.7) {asge document};
\pic[scale=.55] at (60.4,37.0) {asge document};
\pic[scale=.55] at (62.7,38.3) {asge document};
\pic[scale=.60] at (66.6,35.4) {asge lock};
\node[role] at (74.5,39.1)
  {frozen replay\\package};
\node[smalllabel] at (74.5,34.4) {exact calls + bytes};

\draw[selected] (68.0,31.4)
  .. controls (66.0,29.1) and (57.0,28.0) .. (50.2,28.0)
  .. controls (46.0,28.0) and (42.0,26.5) .. (42.0,23.9);
\node[pill] at (61.3,25.7) {freeze once; compare fairly};

\draw[panel] (0,0) rectangle (84,23.8);
\node[anchor=west,font=\scriptsize\bfseries] at (2.3,20.9)
  {(b) Replay the same workflows over a five-VM WAN};

\draw[region] (2.0,8.0) rectangle (82.0,18.2);
\node[anchor=west,font=\fontsize{5.4}{6}\selectfont\bfseries]
  at (3.5,16.5) {Nuremberg};
\draw[helsinki] (28.4,.8) rectangle (48.5,7.1);
\node[anchor=west,font=\fontsize{5.4}{6}\selectfont\bfseries]
  at (29.7,5.7) {Helsinki};

\pic[scale=.68] at (8.5,12.7) {asge server selected};
\node[role,anchor=west] at (12.4,14.2) {controller host};
\node[smalllabel,anchor=west] at (12.4,12.1) {runtime + gateway};
\node[smalllabel,anchor=west] at (12.4,9.8) {CPX32};

\pic[scale=.52] at (37.0,12.7) {asge relay};
\node[role] at (37.0,16.3) {relay A};
\node[smalllabel] at (37.0,9.3) {CPX12};

\pic[scale=.59] at (68.2,13.7) {asge server};
\node[role] at (68.2,16.6) {replica A};
\node[smalllabel] at (68.2,10.2) {CPX12};
\pic[scale=.59] at (77.1,12.3) {asge server selected};
\node[role] at (77.1,16.6) {replica B};
\node[smalllabel] at (77.1,8.8) {CPX12};

\pic[scale=.52] at (36.3,3.4) {asge relay selected};
\node[role] at (43.0,5.7) {relay B};
\node[smalllabel] at (43.0,1.8) {CPX12};

\draw[flow] (24.0,13.4)
  .. controls (27.5,14.1) and (31.0,14.1) .. (34.0,13.2);
\draw[flow] (40.0,13.0)
  .. controls (49.0,14.5) and (57.8,14.7) .. (65.0,13.9);
\node[smalllabel,fill=white,inner sep=.7pt] at (51.8,15.5)
  {alternative};

\draw[selected] (24.0,11.4)
  .. controls (27.8,9.4) and (30.5,4.0) .. (33.3,3.4);
\draw[selected] (39.3,3.4)
  .. controls (52.0,3.4) and (62.0,8.3) .. (74.1,11.8);
\node[smalllabel,fill=white,inner sep=.7pt] at (56.0,6.7)
  {selected mapping};

\end{tikzpicture}
\caption{Main experimental pipeline. (a) We run public OpenHands and GPT
Researcher tasks with \texttt{gpt-5.6-luna}, then freeze the resulting calls
and exact bytes. (b) We replay each captured workflow under every controller
over five VMs connected by real TCP paths between Nuremberg and Helsinki. 
}
\label{fig:wan-topology}
\end{figure}

As an example, Fig.~\ref{fig:wan-control-point} shows one measured decision. We choose the evaluation block with the median call count and use its frozen identifier to break ties, ensuring the selection is independent of the outcome. In that call replay, one OpenHands response completes at $2.910\,\mathrm{s}$. The next call appears $60.3\,\mathrm{ms}$
later. In the WAN experiment setup, the controller binds it to the Helsinki relay and a Nuremberg replica in $8.1\,\mathrm{ms}$, and the gateway opens the connection $0.03\,\mathrm{ms}$ later.

\begin{figure}[t]
\centering
\includegraphics[width=.94\columnwidth]{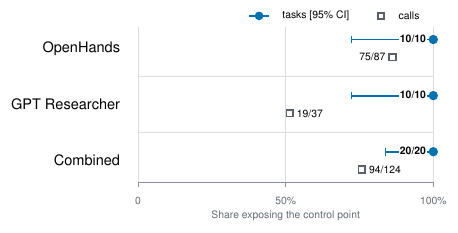}
\caption{Runtime-revealed control opportunity. Each row reports the share of
tasks with at least one result-conditioned call and the share of calls that are
bindable before connection. Filled circles show task shares with 95\% Wilson
intervals; open squares show call shares. Labels give exact counts.}
\label{fig:wan-occurrence}
\end{figure}

\begin{figure}[t]
\centering
\includegraphics[width=.98\columnwidth]{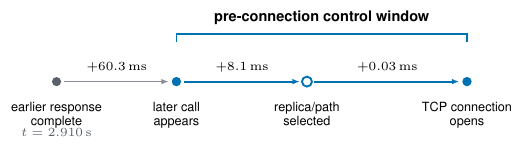}
\caption{Measured pre-connection control window. 
}
\label{fig:wan-control-point}
\end{figure}

\subsection{Validation under a controlled Docker environment}

Because reservations are central to ASGE-RR, we next ask whether keeping a preferred replica-and-path mapping available for a likely later call actually reduces that workflow's latency. We answer this question by comparing ASGE-RR with a matched no-reservation controller. 
Both controllers receive the same calls, forecasts, measurements, available replicas, and paths. The no-reservation controller chooses the lowest-delay mapping for the current call but does not protect capacity between calls. 

We run fixed versions of OpenHands SDK 1.28.1 and GPT Researcher 3.5.1. For each, one task makes a later call only after processing an earlier response. Each trial pairs this task with a direct task from the other framework, which creates competing traffic before the later call appears. 
The runtime processes execute and communicate through a nine-container Docker network, as shown in Fig.~\ref{fig:docker-validation}a. The containers implement a gateway, the controller, two forward relays, two  service replicas, two return relays, and an event recorder. Every dependency call uses a real HTTP connection through the gateway and the selected forward relay, service replica, and return relay.

We deliberately replay the same responses in every comparison so that both controllers receive identical application inputs and responses. The OpenHands responses are recorded during \texttt{gpt-5.6-luna} executions; the GPT Researcher responses are prevalidated deterministic fixtures. The actual runtimes still issue, receive, and process every HTTP transaction, allowing us to attribute the measured difference to the network decision rather than model sampling.

We test seven conditions: Moderate and high joint contention test the main mechanism at two load levels. Service-only and path-only contention separate the effects of replica queues and network paths. Abundant capacity and one legal mapping test cases in which there is no useful capacity choice to protect. Finally, a forecast miss measures the cost of reserving capacity for a call that never appears.

One trial executes the complete workflow scenario under one controller. Two runtime pairings, seven conditions, two controllers, and ten repetitions give 280 trials. We repeat the entire experiment with the controller order reversed, yielding 560 runs in total. Reversing the order verifies that warm-up or changes in host conditions do not explain the result.

Fig.~\ref{fig:docker-validation}b reports the results. Under moderate and high contention, ASGE-RR reduces the workflow's release-to-return latency by $1.45\,\mathrm{s}$ and $4.31\,\mathrm{s}$, respectively. The service-only and path-only conditions produce reductions of $1.53\,\mathrm{s}$ and $0.87\,\mathrm{s}$, showing that both replica queues and network paths can create an opportunity for reservations. The reversed-order experiment produces similar results. Under abundant capacity and with only one legal mapping, the intervals include zero: we detect no benefit when there is no useful alternative to protect. When the predicted later call does not appear, the unnecessary reservation instead delays the competing workflow by $0.46\,\mathrm{s}$.

\begin{figure}[t]
\centering
\includegraphics[width=.98\columnwidth]{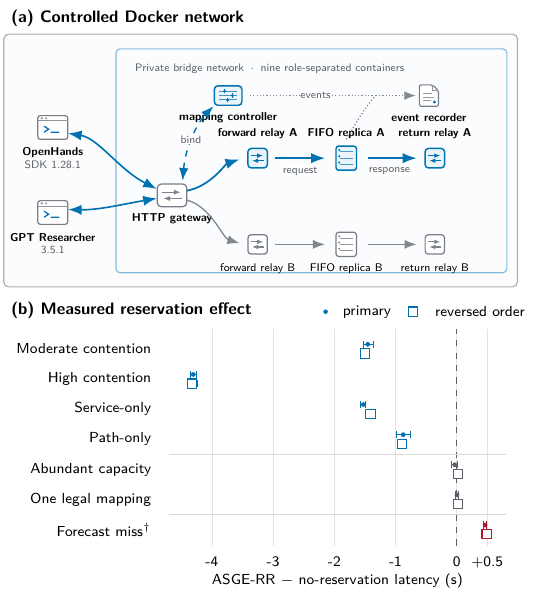}
\caption{
Controlled Docker environment. (a) The runtimes exchange HTTP transactions through nine role-separated containers. Before connection, the controller selects a forward relay, replica, and return relay. (b) ASGE-RR minus matched no-reservation release-to-return latency across seven conditions. Negative values favor ASGE-RR; the $^\dagger$forecast-miss row reports delay to the competing workflow. Each point summarizes 20 paired executions; bars show 95\% bootstrap intervals and open squares reverse controller order.
}
\label{fig:docker-validation}
\end{figure}

\subsection{Evaluation under a WAN testbed environment}

\subsubsection{The testbed}

Fig.~\ref{fig:wan-topology}b shows the setup of the testbed, which consists of a controller, two relay nodes, and two replicas, where the two relays are used to produce separate paths to the two replicas. 
Specifically, one CPX32 VM in Nuremberg runs the runtimes, which are replayed from the recorded runtime package, the gateway function, and the controller function. In addition, it is a recorder that records the performance. Two CPX12 relay VMs, one in Nuremberg and one in Helsinki, offer different paths. Two CPX12 service VMs in Nuremberg provide equivalent replicas. The controller chooses one of four relay-and-replica mappings before the gateway opens a TCP connection. Every request crosses VM interfaces and real TCP sockets, and Linux rate queues create contention from the paired workflow traffic. Two path contention levels are considered. A moderate-contention setting uses 4/8\, Mbit/s rates for the two paths, respectively; the saturation setting uses 2/4\, Mbit/s path rates.

\subsubsection{Confirmation of timing for control}

Recall the discussion about the pre-connection control point in Section~\ref{sec:evaluation.A}. To confirm this in the WAN setup, we measure the timing breakdown of ASGE-RR. The results for the moderate-contention case are shown in Fig.~\ref{fig:wan-timing} for example. We also include the timing breakdown for the rolling-horizon controller (described in the next subsection) as additional evidence. The result shows about $5.0\,\mathrm{s}$ of network transfer and queuing per moderate-contention replay, versus $1.18\,\mathrm{ms}$ of service time. ASGE-RR and rolling horizon decide in 5.8 and $6\,\mathrm{ms}$ on average. The controller therefore acts quickly relative to the network delay, leaving an opportunity to make network decisions for the workflows.

\begin{figure}[!t]
\centering
\includegraphics[width=.86\columnwidth]{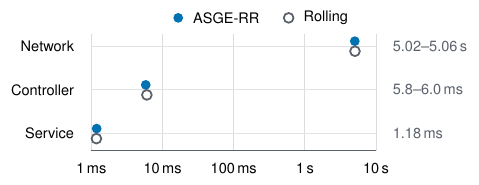}
\caption{Where experimental time is spent (moderate contention). 
}
\label{fig:wan-timing}
\end{figure}

\subsubsection{Performance comparison}

\begin{figure*}[!t]
\centering
\includegraphics[width=.98\textwidth]{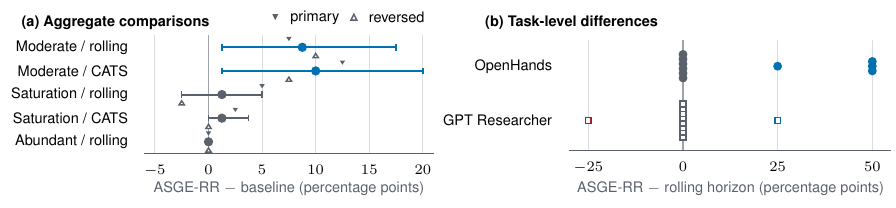}
\caption{Controller effects across operating conditions and evaluation tasks. (a) Circles and horizontal lines show the order-balanced differences and paired 95\% intervals; filled and open triangles show the two controller execution orders. (b) Each marker represents one evaluation block under moderate contention; stacked markers keep ties visible. Positive values favor \asgerr. Blue marks positive differences, red negative differences, and gray ties or intervals with zero.}
\label{fig:wan-results}
\end{figure*}

In Fig.~\ref{fig:wan-results}, we compare ASGE-RR with two strong, complementary baselines. The \textit{rolling-horizon baseline} follows the receding-horizon principle used in network control~\cite{avgeris2023mpc}. It gets the same forecasts and measurements as ASGE-RR, re-optimizes the priority-weighted completion objective when a call appears, and releases its forecast protection between events. The \textit{CATS-style baseline} represents established compute-aware traffic steering~\cite{li2026cats}. It treats each revealed call separately and selects the mapping with the earliest predicted finish. Rolling horizon tests whether retained reservations improve with repeated lookahead, while CATS tests whether workflow-aware control improves over strong current-call steering.

We report results for three experimental conditions: moderate contention, saturation, and abundant capacity. Each of the 20 evaluation tasks defines one evaluation block, and each comparison uses the same blocks and two controller execution orders. Reversing the order helps separate controller effects from warm-up or changes in the hosts and network. An independently implemented checker validates the inputs, mappings, payloads, and deadline outcomes of every reported replay.

Our primary metric is the fraction of offered workflows completed by the deadline. Each workflow has a value and contributes only if the entire workflow finishes on time. We pair controllers within each evaluation block and bootstrap the 20 blocks rather than treating individual calls as independent observations. Since the paired design holds the workflows, topology, forecasts, and configured capacity fixed across controllers, a potential gain of one controller over another, therefore, comes from the controller's decisions. The results are summarized in Fig.~\ref{fig:wan-results}. 

Fig.~\ref{fig:wan-results}a shows the aggregate comparisons. Specifically, under abundant capacity, ASGE-RR achieves no improvement. This is expected, because in this case, the two paths show no difference for ASGE-RR to exploit. When this difference appears, the improvement  emerges. Particularly, under saturation, compared with rolling horizon, ASGE-RR changes from 5.0 points better in the first order to 2.5 points lower when the order is reversed. The order-balanced difference is therefore 1.25 points, with an interval that includes zero (\(-2.5\)--5.0). The order-balanced difference against CATS is also 1.25 points and includes zero (0--3.75). 
Under moderate contention, ASGE-RR increases the fraction of workflow value completed by the deadline by 8.75 percentage points over rolling horizon (95\% paired interval: 1.25--17.5) and by 10 percentage points over CATS (1.25--20.0). The two controller execution orders give rolling-horizon differences of 7.5 and 10.0 points, respectively.

Fig.~\ref{fig:wan-results}b shows where that gain occurs. Each of the 20 evaluation tasks defines one evaluation block. ASGE-RR improves completed value in five blocks---four OpenHands blocks and one GPT Researcher block---ties rolling horizon in fourteen, and completes less value in one GPT Researcher block. These outcomes account for all 20 evaluation blocks.
The gain concentration matches the proposed mechanism. Reservations can change an outcome only when a current choice would otherwise consume capacity needed by a later call from another workflow to complete by  deadline. ASGE-RR preserves that option and revises it when the later call appears, without changing workflow reasoning or calls already issued.

\section{Related Work}
\label{sec:related}


\emph{Service graphs and embedding.}
SFC already supports service graphs, cycles, stateful forwarding, policy, and dynamic paths~\cite{halpern2015sfc}. ETSI VNF-FGs also represent graph and nested-graph services~\cite{etsi2014nfv}. VNF orchestration and VNE map functions and edges onto substrates with limited compute and bandwidth~\cite{bari2016orchestrating,chowdhury2009vne}, including affinity, trust, and path constraints~\cite{bouten2017affinity,torkzaban2020trust}. Online VNE admits requests one by one as complete objects~\cite{even2013onlinevne}. However, ASGE builds on the underlying infrastructure that has established network and resource placement and focuses on how to make use of them for workflows with runtime information. Most importantly, the service graph for a workflow in ASGE is unknown a priori, and consequently its embedding is dynamic and affected by any new call after the workflow has begun and earlier mappings have been committed.

\emph{Uncertainty-aware embedding and predictive control.}
Stochastic VNE represents uncertain link demand with scenarios and recourse~\cite{farkiani2019stochastic}. Model-predictive control (MPC) for SFC assurance uses a receding-horizon approach~\cite {avgeris2023mpc}. In contrast to ASGE-RR, later calls are predicted while their identities remain unrevealed, and earlier calls remain committed. 

\emph{Priority and deadline control.}
Prioritized dynamic-SFC systems admit, place, and reconfigure known chains under shared capacity~\cite{farkiani2021prioritized}. DCoflow admits and schedules coflows with deadlines~\cite{luu2022dcoflow}. ASGE-RR instead uses a dynamic reservation-priority score that changes with workflow value, predicted on-time probability, urgency, and capacity for future calls.

\emph{Steering and agent embedding.}
An individual ASGE mapping decision resembles Computing-Aware Traffic Steering (CATS), which uses compute and network metrics to select a service instance and overlay path~\cite{li2026cats}. The online ASGE setting is similar to that of AgentVNE, which maps agent nodes and virtual links under compute, memory, bandwidth, and affinity constraints and addresses topology changes~\cite{zheng2026agentvne}. 
Unlike them, the ASGE problem couples the mapping choices with fixed, earlier bindings, shared capacity, and deadline information. In addition, ASGE-RR selects replicas and forward and return paths for dependency calls under hard endpoint and transit policy as execution unfolds. 

\emph{AI-agent serving and orchestration.}
To serve AI-agent workflows, many serving engines and orchestrators have emerged. AIALTO routes variable-fan-out outputs through stateful compound-AI pipelines~\cite{santhanam2024alto}. Agentix schedules task graphs discovered at runtime~\cite{luo2026agentix}, while Nalar supports changing invocations, managed state, and adaptive instance routing~\cite{laju2026nalar}. Murakkab co-optimizes workflows, models, and hardware. Maestro uses cross-cluster RTT, model readiness, and memory pressure~\cite{chaudhry2026murakkab,wang2026maestro}. 
ASGE complements these systems by using runtime-revealed information to choose the replica and paths when a dependency call appears.

\section{Conclusion} 

AI-agent workflows increasingly interact with distributed models, memory systems, and tools whose invocation patterns are revealed only during execution. This runtime uncertainty creates a new networking challenge. We formalized this challenge as Agentic Service Graph Embedding (ASGE) and introduced a complete-information benchmark, assuming that future workflow evolution is known. To address the online setting, we presented ASGE-RR, an online ASGE controller that uses revisable reservations while preserving hard capacity and policy feasibility.
The experimental results with replayed real AI-agent workflows indicated that runtime-revealed AI-agent calls create a practical opportunity for networks to complete more workflows by deadline using existing capacity. This more broadly suggests that network control for agentic systems should account for evolving workflow dependencies rather than optimize each call in isolation.

\section*{Use of AI Disclosure}
Several AI tools were used in the experiments as introduced in the paper. In addition, OpenAI Codex was used to help implement ASGE-RR for the experiments. The authors remain responsible for all the results.

\balance

\end{document}